\newcommand{\be}{\begin{equation}}
\newcommand{\ee}{\end{equation}}
\newcommand{\ba}{\begin{eqnarray}}
\newcommand{\ea}{\end{eqnarray}}
\newtheorem{thm}{Theorem}[section]
\newcommand{\p}{\partial}
\begin{document}
\centerline{\bf  Covariant description of isothermic surfaces}

\bigskip

-\centerline{J. Tafel}

\noindent
\centerline{Institute of Theoretical Physics, University of Warsaw,}
\centerline{Ho\.za 69, 00-681 Warsaw, Poland, email: tafel@fuw.edu.pl}

\bigskip

\begin{abstract}
We present a covariant formulation of the Gauss-Weingarten equations and the Gauss-Mainardi-Codazzi equations for   surfaces in 3-dimensional curved spaces. We derive a coordinate invariant condition on the first and second fundamental form which is locally  necessary and sufficient for the surface to be isothermic. We show how to construct isothermic coordinates.
\end{abstract}

\noindent
Keywords: isothermic surfaces, the Gauss-Weingarten equations


\noindent
MCS: 53C42, 53A35

\section{Introduction} 
Studies of isothermic surfaces were originated in 1837  by  Lame \cite{l} who considered surfaces of constant temperature in a solid body. Bertrand \cite{b} is an author of  the  present definition of  isothermic  surfaces as  surfaces  admitting  coordinates in which the first fundamental form is conformally flat and the second one is diagonal. Their basic properties were found by Bour \cite{bo}, Darboux \cite{d}, Calapso \cite{c} and Bianchi \cite{bi} (see \cite{k} for a review). A transformation given by Darboux and developed by Bianchi allow to obtain a family of new isothermic surfaces from a given one. This property was crucial for an interpretation of isothermic surfaces as soliton surfaces by Cie\'{s}li\'{n}ski, Goldstein and Sym \cite{cgs}. They recognized the Gauss-Mainardi-Codazzi equations for the isothermic surfaces  as a completely integrable system. These results initiated a new interest in isothermic surfaces (see \cite{rs,h,bc} for a review), including higher dimensions \cite{bpt,b1},  
discretizations \cite{bp}  and  their generalizations (see e.g. \cite{d1}).

Most studies on isothermic surfaces refer to surfaces in a flat or conformally flat 3-dimensional space (see \cite{bdpp} and references therein for surfaces in symmetric spaces). Moreover, their  definition is coordinate dependent. The main result of this paper is   a coordinate invariant characteristic of isothermic surfaces which is also valid in a general curved 3-dimensional space.

In section 2 we give a covariant formulation of the generalized  Gauss-Weingarten (GW) equations  and their integrability conditions known as the Gauss-Mainardi-Codazzi (GMC) equations. We also present a method of a construction of a surface in flat space $R^3$ with prescribed fundamental forms, which is simpler than passing through the full system of the GW equations.

In section 3 we analyze equations for a coordinate transformation which put a 2-dimensional Riemannian metric $g_{ab}$ into a conformally flat  form   and, simultaneously, diagonalizes a symmetric tensor $K_{ab}$. Their integrability condition reads $d\omega=0$, where $\omega$ is a differential form defined by the metric and the tensor.  This equation, with $g_{ab}$ and $K_{ab}$ corresponding to fundamental forms $g_I$ and $g_{II}$ of a surface, should complete the GW and the GMC equations in the case of isothermic surfaces. It is conformally invariant and it allows an easy verification, in any system of coordinates, if a surface corresponding to some $g_I$ and $g_{II}$ is isothermic. If it is satisfied a construction of isothermic coordinates can be performed in quadratures. 

If we represent a surface as a graph of the form $x^3=f(x^1,x^2)$ and express  $g_I$ and $g_{II}$ in terms of $f$, then  condition $d\omega=0$ becomes a fourth order nonlinear equation for the function $f$. In principle, this  equation  can be  used  to find isothermic surfaces in a Riemannan 3-space.

\section{The Gauss-Weingarten equations}
Let $S$ with coordinates $\xi^a$, where $a\in\{1,2\}$, be  a 2-dimensional surface in a 3-dimensional manifold $M$ with  metric $\tilde g$ and coordinates $x^i$, $i\in\{1,2,3\}$. The induced metric (the first fundamental form)     is given by
\be\label{2}
g_I=g_{ab}d\xi^a d\xi^b\ ,\ \ g_{ab}=\tilde g_{ij}x^i_{,a}x^j_{,b}\ ,
\ee 
where
\be\label{1}
\frac{\p x^i}{\p \xi^a}=x^i_{,a}\ .
\ee
Vectors $t_a=x^i_{,a}\p_i$ are tangent to $S$ and orthogonal to the unit  vector $n=n^i\p_i$ 
\be\label{3}
n_ix^i_{,a}=0\ ,\ \ n_in^i=1\ .
\ee

The field $x^i_{,a}$ transforms as a vector under transformations of coordinates $x^i$ and as a covector under a change of $\xi^a$. Thus, it makes sense to define its covariant derivative on $S$ as
\be\label{3a}
Dx^i_{,a}=dx^i_{,a}+\tilde \omega^i_{\ j}x^j_{,a}-\omega^b_{\ a}x^i_{,b}\ ,
\ee
where $\tilde \omega^i_{\ j}$ and $\omega^a_{\ b}$ are the Levi-Civita connections related, respectively, to metric $\tilde g$ and $g_I$.
The tensorial 1-form   $Dx^i_{,a}$ transforms as $x^i_{,a}$ under a change of coordinates. Its decomposition in the basis $d\xi^a$   yields 
\be\label{4}
Dx^i_{,a}=x^i_{\ ab}d\xi^b\ ,
\ee
where
\be\label{5a}
x^i_{\ ab}=x^i_{,ab}+\tilde \Gamma^i_{\ jk}x^j_{,a}x^k_{,b}-\Gamma^c_{\ ab}x^i_{,c}
\ee
and $\tilde\Gamma^i_{\ jk}$ and $\Gamma^c_{\ ab}$ are, respectively, the Christoffel symbols of $\tilde g$ and $g_I$. Note that $x^i_{\ ab}$ is symmetric in the lower indices. 

For any pair $(a,b)$  we can decompose vector $x^i_{\ ab}\p_i$ in the basis  $\{t_a,n\}$
\be\label{6a}
x^i_{\ ab}=x^i_{,d}f^d_{\ ab}+n^if_{ab}\ .
\ee
Unknown coefficients $f_{ab}$ and $f^c_{\ ab}$ have to be symmetric in indices $(a,b)$. The contraction of (\ref{6a}) with $n_i$ yields $f_{ab}=-K_{ab}$, where $K_{ab}$ are components of  the second fundamental form (the exterior curvature) of $S$  given by
\be\label{7a}
g_{II}=K_{ab}d\xi^ad\xi^b=Dn_idx^i=(n_{i,j}-\tilde \Gamma^k_{\ ij}n_k)x^j_{,a}x^i_{,b}d\xi^ad\xi^b\ .
\ee
 A contraction of (\ref{6a}) with $\tilde g_{ij}x^j_{,c}$ shows that $f_{abc}=-f_{bac}$. This together with $f_{abc}=f_{acb}$ yields $f_{abc}=0$. Hence,
\be\label{8a}
Dx^i_{,a}=-n^iK_{ab}d\xi^b.
\ee

Equation (\ref{8a}) is a part of the  Gauss-Weingarten  equations \cite{h}. In order to complete them we calculate the covariant derivative of $n^i$ written in the form 
\be\label{9a}
n^i=\frac 12\tilde \eta^i_{\ jk}x^j_{,a}x^k_{,b}\eta^{ab}.
\ee
Here $\tilde\eta^{ijk}$ and $\eta^{ab}$ are the Levi-Civita completely antisymmetric tensors corresponding to $\tilde g$ and $g_I$, respectively. Taking into account $(\ref{8a})$ and $D\eta^{ijk}=D\eta^{ab}=0$ one obtains
\be\label{10a}
Dn^i=x^i_{,a}K^a_{\ b}d\xi^b\ .
\ee
Equations (\ref{8a}) and (\ref{10a}) generalize the GW equations. Note that for a nonvanishing Gauss curvature $K=\det{K^b_{\ a}}\neq 0$  condition (\ref{10a}) together with $n^2=1$, $K_{[ab]}=0$ and $g_{ab}=x^i_{,a}\tilde g_{ij}x^j_{,b}$ implies $n_ix^i_{,a}=0$ and (\ref{8a}). To show this one can take contractions of (\ref{10a}) with $n_i$ and $g_{ij}x^j_{,c}$ and repeat the proof of $f_{abc}=0$ following (\ref{6a}).

Assume that a 3-dimensional metric $\tilde g_{ij}$ on $M$ and tensors $g_{ab}$ and $K_{ab}$ on a 2-dimensional manifold $S$ are given. 
If there exist solutions $x^i(\xi^a)$ and $n^i(\xi^a)$ of equations  (\ref{8a}), (\ref{10a}) and constraints (\ref{2}) and (\ref{3}), then   $S$ can be immersed in $M$ in such a way  that $g_{ab}$ and $K_{ab}$ are, respectively, the first and the second fundamental form of $S$. Let us consider  equations (\ref{1}), (\ref{8a}) and (\ref{10a}) as an overdetermined  system for $x^i$, $x^i_{,a}$ and $n^i$.  Covariant derivatives of constraints (\ref{2}) and (\ref{3}) vanish, hence, it is sufficient to assume them in a single point.  Integrability of (\ref{1}) is assured by the symmetry $x^i_{\ ab}=x^i_{\ ba}$ following from (\ref{8a}). Integrability conditions of (\ref{8a}) and (\ref{10a}) yield the generalized GMC equations. They coincide with the following equations 
obtained within the Arnowitt-Deser-Misner   decomposition \cite{adm}  of the Einstein tensor of $\tilde g$
\be\label{11a}
\tilde G_{ij}n^in^j=\frac 12(H^2-K_{ab}K^{ab}-R)
\ee
\be\label{12a}
\tilde G_{ij}n^ix^j_{,a}=(H\delta^b_{\ a}-K^b_{\ a})_{|b}\ .
\ee
Here $\tilde G_{ij}$ is the Einstein tensor of $\tilde g$, $H=K^a_{\ a}$ is the mean curvature of $S$, $R$ is the Ricci scalar of $g_I$ and ${}_{|b}$ denotes the covariant derivative corresponding to $g_I$. Note that for a 2-dimensional surface $\frac 12(H^2-K_{ab}K^{ab})$ is just the Gaussian curvature $K$.

For general $\tilde g$ equations (\ref{11a}) and (\ref{12a}) are additional constraints for $x^i$, $x^i_{,a}$ and $n^i$. They can be further differentiated until one obtains conditions which contain only components of fundamental forms $g_I$ and $g_{II}$.   
Equations (\ref{11a}) and (\ref{12a}) immediately reduce to equations for $g_I$ and $g_{II}$ if $\tilde g$ is an Einstein's metric,  $\tilde G_{ij}=\Lambda\tilde g_{ij}$. 3-dimensional metrics of this type are locally equivalent to  standard metrics of the  flat space $R^3$, sphere $S_3$ or pseudosphere (hyperbolic space) $H_3$. In these cases, given forms $g_I$ and $g_{II}$ which satisfy  (\ref{11a}) and (\ref{12a}), one can find a corresponding surface via solving the linear system (\ref{8a})-(\ref{10a}).

In the case of $R^3$ and $K\neq 0$ this procedure can be  simplified by using the tensor 
\be\label{12c}
g_{III}=K_{ac}K^c_{\ b}d\xi^ad\xi^b=Hg_{II}-Kg_I
\ee
called the third fundamental form.
It follows  from (\ref{10a})  that
\be\label{12b}
g_{III}=\bar n_{,a}\bar n_{,b}d\xi^ad\xi^b\ ,
\ee
where $\bar n$ denotes the normal vector  in the Cartesian coordinates.
In the region, where $n^3\neq 1$, equation (\ref{12b}) can be written in the form
\be\label{12d}
g_{III}=\frac{dn dn^*}{(1+\frac 14nn^*)^2}\ ,
\ee
where $n=\frac{n^1+in^2}{1-n^3}$ and $n^*$ is its complex conjugate. If $K\neq 0$  then  $g_{III}$ is nondegenerate, functions $n$ and $n^*$ must be independent and  the rhs of (\ref{12d}) is the standard spherical metric. 
An interesting consequence of (\ref{12c}) and (\ref{12b}) is that metric of any minimal ($H=0$) surface in $R^3$  satisfies 
\be
g_I=|K|^{-1}d\bar nd\bar n\ ,
\ee
exactly as it is for constant radius spheres, which are not minimal surfaces.

Given  $g_I$ and $g_{II}$ satisfying the GMC equations, from (\ref{12d}) one can derive the complex stereographic coordinate $n$ and normal vector $\bar n$. To this end one should first find  complex coordinates $\xi$, $\xi^*$ in which form (\ref{12c}) is conformally flat
\be
g_{III}=e^{2u}d\xi d\xi^*\ .\label{12e}
\ee
This can be done by solving the complex linear equation
\be\label{12p}
d\xi\wedge (\theta^1+i\theta^2)=0\ ,
\ee
where $\theta^a$ is an orthonormal basis for tensor $g_{III}$. Equation (\ref{12p}) has the form $v(\xi)=0$, where $v$ is a complex vector field. If it is analytic, equation (\ref{12p}) can be solved by the method of characteristics.

It follows from (\ref{12d}) and (\ref{12e})  that $n$ is a holomorphic function of $\xi$ (or antiholomorphic but then we can change this by means of  a reflection in $R^3$) and 
\be
e^{-u}=\frac{1}{\sqrt{\dot n^*}}\frac{1}{\sqrt{\dot n}}+\frac{n^*}{\sqrt{\dot n^*}}\frac{n}{4\sqrt{\dot n}}\ ,\label{12f}
\ee
where $\dot n=n_{,\xi}$. Let us consider  holomorphic extension of equation (\ref{12f}) to two independent complex variables, still denoted by $\xi$ and $\xi^*$, and then fix $\xi^*=\xi_0^*$. It follows that
\be
e^{-u_0}=A_{11}\frac{1}{\sqrt{\dot n}}+A_{21}\frac{n}{4\sqrt{\dot n}}\ ,\label{12h}
\ee
where $u_0(\xi)=u(\xi,\xi_0^*)$ and $A_{11}$ and $A_{21}$ are complex numbers.  
If we apply the same procedure to  the derivative of (\ref{12f}) over $\xi^*$  we obtain
\be
(e^{-u_0})_{,\xi_0^*}=A_{12}\frac{1}{\sqrt{\dot n}}+A_{22}\frac{n}{4\sqrt{\dot n}}\ .\label{12i}
\ee
Matrix $(A_{ab})$ is an element of the group $SL(2,C)$. Equations (\ref{12h}) and (\ref{12i}) can be easily solved with respect to $n$ and $\dot n$ giving
\be\label{12l}
n=-4\frac{A_{11}u_{,\xi^*_0}+A_{12}}{A_{21}u_{,\xi^*_0}+A_{22}}\ ,\ \ \dot n=\frac{e^{2u_0}}{(A_{21}u_{,\xi^*_0}+A_{22})^{2}}\ .
\ee

Equation (\ref{12f}) cannot define the stereographic coordinate $n$  better than up to the following transformations, with complex parameters $\alpha$ and $\beta$,
\be\label{12m}
n\rightarrow\frac{\alpha n+\beta}{-\beta^* n+\alpha^*}\ ,\ \ \ |\alpha|^2+|\beta|^2=1
\ee
which correspond to rigid rotations in $R^3$.
One can  use (\ref{12m}) to reduce (\ref{12l}) to
\be
n=-4A(u_{,\xi^*_0}+B)\ ,\ \ \dot n= A e^{2u_0}\label{12j}
\ee
with a complex constant $B$ and a real positive constant  $A$.  Substituting  (\ref{12j}) back into (\ref{12f}) yields equality
\be
1+4A^2|u_{,\xi^*_0}+B|^2=Ae^{u_0+u_0^*-u}\ ,\label{12k}
\ee
which should be satisfied for all values of $\xi$. In order to fit constants $A$ and $B$ algebraically it is sufficient to impose (\ref{12k}) at three points. Given them we can calculate the normal vector $\bar n$ and substitute it into (\ref{10a}) written as
\be\label{12n}
x^i_{,a}=K^{-1}(H\delta_a^{\ b}-K_a^{\ b})n^i_{,b}\ .
\ee
For complex conjugated $\xi$ and $\xi^*$ functions $x^i(\xi,\xi^*)$ follow from  integration of (\ref{12n}) up to constants, which together with (\ref{12m}) represent an Euclidean motion in $R^3$. In this way one obtains a parametric description of the surface with prescribed  fundamental forms, provided that they satisfy the GMC equations. This procedure seems to be much simpler than passing through the full system of the GW equations (see, e.g. \cite{kta}).
\section{Isothermic surfaces}\
The conformal transformation 
\be\label{13a}
\tilde g\rightarrow \Omega^2\tilde g
\ee
implies 
\be\label{21c}
g_I\rightarrow \Omega^2 g_I\ ,\ \ g_{II}\rightarrow \Omega g_{II}+n(\Omega)g_I\ .
\ee
Fundamental properties of  isothermic surfaces, which have conformally flat $g_I$ and diagonal $g_{II}$ in some coordinates $\xi'^a$,  are preserved by these transformations. In view of this, if $\tilde g$ is conformally flat, then in order to find isothermic surfaces  one can first look for solutions of  the flat GMC equations written in  the  coordinates $\xi'^a$.  In this case one can reduce the problem to the Calapso equation \cite{rs}. This approach fails if $\tilde g$ is not conformally flat. For this reason, in this section we will characterize isothermic surfaces independly of the GW and GMC equations. Our aim is to find a coordinate invariant condition which is locally necessary and sufficient for the existence of isothermic coordinates.  An interpretation of $K_{ab}$ as the second fundamental form is not relevant for this condition.

\begin{thm}
 Let $g_{ab}$ and $K_{ab}$ be, respectively,  a 2-dimensional Riemannian metric and a symmetric tensor, which are locally linearly independent. Then, $g_{ab}$ is conformally flat and $K_{ab}$ is diagonal in some coordinates iff
\be
d\omega=0\ ,\label{22b}
\ee
where
\be
\omega=k_a^{\ b}k_{b\ |c}^{\ c}d\xi^a\ ,\label{22}
\ee
\be 
k_{ab}=\alpha^{-1} (K_{ab}-\frac 12K_c^{\ c}g_{ab})\label{22a}
\ee
 and $\alpha>0$ is a normalization factor such that  $k_{ab}k^{ab}=2$. 
\end{thm}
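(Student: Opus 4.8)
My plan is to trivialize the ``conformally flat'' requirement first and reduce the whole statement to a single scalar condition on the \emph{phase} of $K$. Since every $2$-dimensional Riemannian metric admits local isothermal coordinates, I would fix a conformal chart in which $g$ takes the form $g=\lambda\,dz\,d\bar z$ with $\lambda=e^{2u}>0$ and $z=\xi^1+i\xi^2$. The only remaining coordinate freedom preserving conformal flatness of $g$ is a holomorphic reparametrization $z\mapsto w(z)$, so the problem becomes: decide whether some such $w$ also diagonalizes $K$. I would encode the normalized trace-free tensor $k_{ab}$ as a quadratic differential $\phi\,dz^2$ with $\phi=k_{zz}$; a short computation shows the normalization $k_{ab}k^{ab}=2$ forces $|\phi|=\lambda/2$, so I may write $\phi=\tfrac12\lambda e^{i\theta}$ with a real phase $\theta$ that is well defined locally precisely because linear independence of $g$ and $K$ guarantees $k\neq0$ and hence $\phi\neq0$.

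Next I would translate ``diagonal'' into this language. In a conformal chart $K$ is diagonal iff its trace-free part is, i.e. iff $k_{12}=0$, which is exactly the condition that $\phi$ be real. Under a holomorphic change $w(z)$ the quadratic differential transforms by $\phi\mapsto\phi/(w')^2$, so diagonalizing coordinates exist iff one can choose holomorphic $w$ with $\arg w'=\theta/2$. Since the real and imaginary parts of $\log w'$ are harmonic conjugates, such a $w$ exists locally iff $\theta/2$ (equivalently $\theta$) is harmonic; in that case $w'=\exp(\log\rho+i\theta/2)$ is obtained in quadratures, recovering the promised construction of isothermic coordinates. Thus \emph{isothermic $\iff$ $\theta$ harmonic}.

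It then remains to show that $d\omega=0$ is the same scalar condition, which I would establish by computing $\omega$ in the conformal chart. Using that the only nonvanishing Christoffel symbols of $g=\lambda\,dz\,d\bar z$ are $\Gamma^z_{zz}$ and $\Gamma^{\bar z}_{\bar z\bar z}$, the divergence of the trace-free tensor collapses to $k^{\ c}_{z\ |c}\propto\lambda^{-1}\p_{\bar z}\phi$, and feeding this together with $k_z^{\ \bar z}=2\phi/\lambda$ into $\omega_a=k_a^{\ b}k^{\ c}_{b\ |c}$ yields the clean result $\omega_z=\p_z\log\lambda-i\,\p_z\theta$, i.e. $\omega=d\log\lambda-i(\theta_z\,dz-\theta_{\bar z}\,d\bar z)$. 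The exact term drops out under $d$, leaving $d\omega=4\,\theta_{z\bar z}\,d\xi^1\wedge d\xi^2=(\Delta\theta)\,d\xi^1\wedge d\xi^2$, so $d\omega=0\iff\theta$ is harmonic. Combining this with the previous paragraph proves both implications at once, and since $\omega$ and $d\omega$ are defined invariantly the single-chart computation suffices.

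I expect the main obstacle to be the two bookkeeping-heavy verifications hidden above: pinning down $|\phi|=\lambda/2$ and, above all, reducing $\omega$ to $\p_z\log\lambda-i\,\p_z\theta$ without sign or convention errors in the covariant divergence. The conceptually delicate point is the local existence of the harmonic conjugate producing $w$; I must also check that $w'\neq0$ so that $w$ is a genuine local diffeomorphism, and that the degenerate locus where $g$ and $K$ become linearly dependent, on which $\theta$ is undefined, is excluded by the hypothesis.
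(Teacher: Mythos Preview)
Your approach is correct and takes a genuinely different route from the paper. The paper works in arbitrary coordinates: it writes the trace-free part as $\hat K_{ab}=\alpha(2u_au_b-g_{ab})$ with $u$ a unit eigenvector, sets up the linear conditions that the Jacobian $L^a_{\ b}$ of the desired coordinate change must satisfy, solves them to get $d\xi'^1=\lambda\,\eta_{ab}u^b\,d\xi^a$ and $d\xi'^2=\pm\lambda\,u_a\,d\xi^a$, and then reads off the Frobenius integrability condition, which becomes $(\ln|\lambda|)_{,a}=-\tfrac12 k_a^{\ b}k_{b\ |c}^{\ c}$ and hence $d\omega=0$. You instead fix a conformal chart $z$ from the outset and package $k$ as a quadratic differential $\phi\,dz^2$ with $|\phi|=\lambda/2$, so that both the existence of isothermic coordinates and the vanishing of $d\omega$ reduce to harmonicity of the single scalar $\theta=\arg\phi$. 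Your argument makes the connection with the Hopf-differential viewpoint explicit (the paper alludes to this only in passing) and gives the isothermic chart very concretely as $w=\int\exp(\psi+i\theta/2)\,dz$ with $\psi$ a harmonic conjugate of $\theta/2$; the paper's argument, by contrast, is manifestly covariant at every step and produces the eigenvector representation $k_{ab}=2u_au_b-g_{ab}$, which it reuses afterwards. The two constructions coincide once one identifies $u$ with the real direction of $\sqrt{\phi}\,dz$. Your flagged pitfalls (sign bookkeeping in the divergence, ensuring $w'\neq0$, the antiholomorphic branch) are genuine but routine; in particular the key reduction $\omega_z=\partial_z\log\lambda-i\,\partial_z\theta$ goes through exactly as you outline.
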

\begin{proof}
Let $\xi'^a$ be  coordinates in which $g'_{12}=0$, $g'_{11}=g'_{22}$ and $K'_{12}=0$.   A transformation from general coordinates $\xi^a$ to $\xi'^a$ has to satisfy the conditions
\be
L^a_{\ 1}g_{ab}L^b_{\ 2}=0\label{5}
\ee
\be
L^a_{\ 1}g_{ab}L^{b}_{\ 1}=L^a_{\ 2}g_{ab}L^{b}_{\ 2}\label{6}
\ee
\be
L^a_{\ 1}K_{ab}L^{b}_{\ 2}=0\ ,\label{7}
\ee
where $L^a_{\ b}=\frac{\p \xi^a}{\p \xi'^b}$. 
In view of (\ref{5}) equation (\ref{7}) can be replaced by 
\be
L^a_{\ 1}\hat K_{ab}L^{b}_{\ 2}=0\ ,\label{7a}
\ee
where 
\be
\hat K_{ab}=K_{ab}-\frac 12K_c^{\ c}g_{ab}
\ee
is the traceless part of $K_{ab}$.

Let $-\alpha$ be a point dependent eigenvalue of $\hat K^a_{\ b}$. It must be real since any 2-dimensional metric $g_{ab}$ admits  conformally flat coordinates  and in these coordinates matrix $\hat K^a_{\ b}$ is  symmetric. Since $\det{(\hat K_{ab}+\alpha g_{ab})}=0$ and $\hat K_{ab}$ is traceless  it can be written in the form
\be
\hat K_{ab}=\alpha  (2u_au_b-g_{ab})\ ,\label{8}
\ee
where $u^au_a=1$. Note that $\alpha\neq 0$ since  $K_{ab}$ is not proportional to $g_{ab}$. Thus, locally there is $\alpha> 0$ or $\alpha< 0$ depending on our choice of the eigenvector $u$ (note that  transformation $u^a\rightarrow\eta^{ab}u_b$, $\alpha\rightarrow -\alpha$ preserves (\ref{8})). 
By virtue of (\ref{8}) equation (\ref{7}) leads to $L^a_{\ 1}u_a=0$ (or $L^a_{\ 2}u_a=0$ but then we can interchange indices 1 and 2). Hence
\be
L^a_{\ 1}=\gamma \eta^{ab}u_b\ ,\label{9}
\ee
where $\gamma$ is a function. Substituting (\ref{9}) into (\ref{5}) yields 
$L^a_{\ 2}=\gamma' u^a$. Equation (\ref{6}) shows that $\gamma'^2=\gamma^2$. Thus, equations (\ref{5})-(\ref{7}) are equivalent to (\ref{9}) and 
\be
L^a_{\ 2}=\pm \gamma u^a\ .\label{11}
\ee

From  (\ref{9}) and (\ref{11}) we obtain the following expressions for the matrix $L^{-1}$
\be
\frac{\p \xi'^1}{\p \xi^a}=\lambda \eta_{ab}u^b\label{12}
\ee
\be
\frac{\p \xi'^2}{\p \xi^a}=\pm \lambda u_a\ ,\label{13}
\ee
where $\lambda$ is a nonvanishing function. Conditions  (\ref{12}) and (\ref{13}), considered as equations for coordinates $\xi'^a$, are integrable provided
\be
(\ln{|\lambda|})_{,a}=u^bu_{a|b}-u^b_{\ |b}u_a\ .\label{14}
\ee
Equation  (\ref{14}) can be written in the form
\be
(\ln{|\lambda|})_{,a}=-\frac 12k_a^{\ b}k_{b\ |c}^{\ c}\ ,\label{14a}
\ee
where
\be
k_{ab}=2u_au_b-g_{ab}\ .\label{21}
\ee
It follows from (\ref{21}) and $u^2=1$ that $k_{ab}k^{ab}=2$. Equality (\ref{8}) implies  (\ref{22a}) and 
\be
 \alpha^2=\frac 12\hat K_{ab}\hat K^{ab}\ .\label{21a}
\ee
Equation (\ref{22b}) is just an integrability condition of (\ref{14a}). 
\end{proof}
If $g_{ab}$ and $K_{ab}$ are to be fundamental forms of a surface $S$ in a 3-dimensional Riemannian manifold $M$ they should be compatible with the GMC equations.  Points, where $\hat K_{ab}=0$ are called umbilics. Out of them the normalized tensor $k_{ab}$ can be defined (it can be still prolongated to umbilics, in which $\hat K_{ab}$ has a zero of a finite order). It follows from (\ref{21c}) that $k_a^{\ b}$ and $d\omega$ are conformally invariant. Equation (\ref{22b}) is  an additional condition for the surface to be isothermic. In contrary to other approaches, for instance via the Hopf differential \cite{bob}, this condition  can be verified in any system of coordinates. 

The proof of Theorem 1 indicates how to find isothermic coordinates $\xi'^a$ if condition (\ref{22b}) is satisfied.
We should first find $\lambda$ from (\ref{14a}) and an unit vector $u^a$ from
\be
k^a_{\ b}u^b=u^a\ .\label{21b}
\ee
Then  equations (\ref{12}) and (\ref{13}) can be solved in quadratures. They are integrable thanks to (\ref{22b}). This procedure is much simpler from that presented in \cite{atk}.

It follows from (\ref{21}) that $k^a_{\ b}$ is an involutive matrix
\be\label{23}
k^a_{\ b}k^b_{\ c}=\delta^a_{\ c}\ .
\ee
Substituting (\ref{22a}) and (\ref{23}) into equation (\ref{22b}) yields
\be
\eta^{ad}(\alpha^{-2}\hat K_a^{\ b}\hat K_{b\ |c}^{\ c})_{|d}=0\ ,\label{25}
\ee
Equation (\ref{25}) is automatically satisfied if   $\tilde  G_{ij}=\Lambda \tilde g_{ij}$  and $H=const$, since then equation (\ref{12a}) yields $\hat K_{b\ |c}^{\ c}=0$. This reflects a known fact that constant mean curvature surfaces in space forms $R^3$, $S_3$ or $H_3$ are  isothermic.

\section{Summary}
We have been considering surfaces $S$ in a three dimensional curved space $M$ with metric $\tilde g$.  The Riemannian connections on $M$ and $S$ allow to write    the GW equations  and the GMC equations in a covariant way (equations (\ref{8a})-(\ref{12a})). Using the third fundamental form we have shown how to simplify  a construction of a surface in the flat space $R^3$ given fundamental forms $g_I$ and $g_{II}$ satisfying the GMC equations.

Starting with a 2-dimensional Riemannian metric $g_{ab}$ and a symmetric tensor $K_{ab}$ we defined a 1-form $\omega$, which must be closed if there exist coordinates in which metric is conformally flat and $K_{ab}$ is diagonal (Theorem 1). If these tensors correspond to fundamental forms of a surface $S$, condition $d\omega=0$ is locally necessary and sufficient for the surface to be isothermic.
 This condition   is preserved by transformations of coordinates  and   conformal transformation (\ref{13a}). This is an advantage with respect to standard  descriptions of isothermic surfaces.  If $d\omega=0$, a construction of isothermic coordinates can be easily performed.
 
 In principle, one can also use  equation $d\omega=0$  in order to find isothermic surfaces together with $g_I$ and $g_{II}$. Locally, every surface $S\subset M$ is a graph of a function $f$ of two coordinates.
 The induced metric 
and the normal vector of $S$  depend on $f$ and its first derivatives. Hence, the exterior curvature tensor  depends  on second derivatives of $f$ and equation $d\omega=0$ is a fourth order nonlinear equation for $f$. Unfortunately,  this equation is rather complicated even in the flat space $R^3$, compared e.g. to the Calapso equation.


\begin{thebibliography}{99}

\bibitem{l}
G. Lam\'{e}: M\'{e}moire sur les surfaces isothermes dans les corps solides homogenes en \'{e}qulibre de temp\'{e}rature, \textit{J. Math. Pur. Appl.}  (Liouville J.)\textbf{ 2}, 147-183 (1837)

\bibitem{b}
J. Bertrand: M\'{e}moire sur les surfaces isothermes orthogonales, \textit{J. Math. Pur. Appl.} (Liouville J.) \textbf{9}, 117-130 (1844)

\bibitem{bo}
E. Bour: Th\'{e}orie de la d\'{e}formation des surfaces, \textit{J. l'\'{E}cole Imperiale Polytech.} \textbf{19}, Cahier 39, 1-48 (1862)
\bibitem{d}
G. Darboux: Sur les surfaces isothermiques, \textit{C. R. Acad. Sci. Paris } \textbf{128}, 1299-1305 (1899)

\bibitem{c}
P. Calapso: Sulla superficie a linee di curvature isoterme,\textit{ Rend. Circ. Mat. Palermo } \textbf{17}, 275-286 (1903)

\bibitem{bi}
L. Bianchi: Ricerche sulle superficie isoterme e sulle deformazione delle quadriche, \textit{Ann. Matem.} \textbf{11}, 93-157 (1905)

\bibitem{k}
P. Klimczewski, M. Nieszporski  and A. Sym: Luigi Bianchi, Pasquale Calapso and solitons, \textit{Rend. Sem. Mat.
Messina, Atti del Congresso Internazionale in onore di Pasquale Calapso}, Messina 1998, 223-240 (2000)

\bibitem{cgs}
J. Cie\'{s}li\'{n}ski, P Goldstein  and A. Sym: Isothermic surfaces in E3 as soliton surfaces,\textit{ Phys. Lett.} \textbf{A 205}, 37-43 (1995)

\bibitem{rs}
C. Rogers  and W.K. Schief: \textit{B\"{a}cklund and Darboux Transformations. Geometry and Modern Applications in Soliton Theory}, Cambridge Texts in Applied Mathematics, Cambridge University Press, 2002

\bibitem{h}
 U. Hertrich-Jeromin: \textit{Introduction to M\"{o}bius Differential Geometry}, Cambridge University Press, 2003
 
\bibitem{bc}
F.E. Burstall and D.M.J. Calderbank: Conformal submanifold geometry I-III, arXiv: 1006.5700 

\bibitem{bpt}
Martina Br\"{u}ck, Xi Du, Joonsang Park and Chuu-Lian Terng: The submanifold geometries associated to Grassmannian systems, \textit{Memoirs of the American Mathematical Society} \textbf{155}, No 735 (2002)
 
 \bibitem{b1}
F.E. Burstall: Isothermic surfaces: conformal geometry, Clifford algebras and integrable systems.  in: C.-L. Terng (ed.), \textit{Integrable systems, Geometry and Topology},  AMS/IP Studies in Advanced Math. \textbf{36}, 1-82 (2006)

\bibitem{bp}
A.I. Bobenko and U. Pinkall: Discrete Isothermic Surfaces, \textit{J. Reine Angew. Math.} \textbf{475 }, 187-208 (1996)


\bibitem{d1}
A. Doliwa, Generalized isothermic lattices: \textit{J. Phys. A: Math. Theor.} \textbf{40 }, 12539-12561 (2007)


\bibitem{bdpp}
F. E. Burstall, N. M. Donaldson, F. Pedit and U. Pinkall: Isothermic submanifolds of symmetric $R$-spaces, \textit{J. Reine Angew. Math}. \textbf{660}, 191-243 (2011)

\bibitem{adm}
N. Straumann, \textit{General Relativity} (Second Edition): Graduate Text in Physics,  Springer, 2013

\bibitem{kta}
Z. Kose, M. Toda and E. Aulisa: Solving Bonnet problems to construct
families of surfaces, \textit{Balk. J. Geom.  Appl.} \textbf{ 16 }, 70-80  (2011)

\bibitem{bob}
A. I. Bobenko: Exploring Surfaces through Methods from the Theory of
Integrable Systems. Lectures on the Bonnet Problem, 1999,  arXiv:math/9909003

\bibitem{atk}
E. Aulisa, M. Toda and Z. Kose: Constructing isothermal curvature line coordinates on surfaces which admit them, \textit{Cent. Eur. J.  Math.}  \textbf{11}, 1982-1993 (2013)

  \end{thebibliography}
\end{document}